\title{Quantum circuits of $T$-depth one}
\author{Peter Selinger\\
  Department of Mathematics and Statistics, Dalhousie University}
\date{\begin{minipage}{0.9\textwidth}
  \normalsize We give a Clifford+$T$ representation of the Toffoli gate of
  $T$-depth 1, using four ancillas. More generally, we describe a
  class of circuits whose $T$-depth can be reduced to 1 by using
  sufficiently many ancillas.  We show that the cost of adding an
  additional control to any controlled gate is at most 8 additional
  $T$-gates, and $T$-depth 2. We also show that the circuit $THT$ does
  not possess a $T$-depth 1 representation with an arbitrary number of
  ancillas initialized to $\ket{0}$.
\end{minipage}}
\begin{document}
\maketitle

\section{Introduction}

It is known that the gates of the Clifford group, together with the
single-qubit non-Clifford gate
\[ T = \zmatrix{cc}{1&0\\0&e^{i\pi/4}},
\]
form a good universal gate set for fault-tolerant quantum computation
{\cite{Buhrman-Cleve-etal}}. The decomposition of arbitrary gates into
this Clifford+$T$ set, either exactly or to within some given
accuracy $\epsilon$, is an important problem
{\cite{Kliuchnikov-et-al}}. It is often desirable to find
decompositions that are optimal with respect to a given cost
function. The exact cost function used is application dependent; some
possibilities are: the total number of gates; the total number of
$T$-gates; the circuit depth; and/or the number of ancillas used.

Amy et al.~{\cite{AMMR12}} recently proposed {\em $T$-depth} as a cost
function. The idea is to count the number of {\em $T$-stages} in a
circuit, rather than the number of $T$-gates. A $T$-stage is a group
of one or more $T$- and/or $T\da$-gates on distinct qubits that can be
performed simultaneously. Note that, for the purpose of computing
$T$-count or $T$-depth, the gates $T$ and $T\da$ can be treated
interchangeably, due to the identity $T\da=TS\da$.

To illustrate the concept of $T$-depth, consider the standard
decomposition of the Toffoli gate into the Clifford+$T$ set, as
given in {\cite{Nielsen-Chuang}}:
\begin{equation}\label{eqn-nc}
  \m{\begin{qcircuit}[scale=0.5]
    \grid{2}{0,1,2};
    \controlled{\notgate}{1,0}{1,2};
  \end{qcircuit}
  } =
  \m{\begin{qcircuit}[scale=0.5]
    \grid{14.2}{0,1,2};
    \gate{$H$}{1,0};
    \controlled{\notgate}{2,0}{1};
    \gate{$T\da$}{3,0};
    \controlled{\notgate}{4,0}{2};
    \gate{$T$}{5,0};
    \controlled{\notgate}{6,0}{1};
    \gate{$T\da$}{7,0};
    \controlled{\notgate}{8,0}{2};
    \gate{$T$}{9,0};
    \gate{$T\da$}{9,1};
    \gate{$H$}{10.2,0};
    \controlled{\notgate}{10.2,1}{2};
    \gate{$T\da$}{11.2,1};
    \controlled{\notgate}{12.2,1}{2};
    \gate{$S$}{13.2,1};
    \gate{$T$}{13.2,2};
  \end{qcircuit}}
\end{equation}
This decomposition has $T$-count $7$, and in the exact form written,
it has $T$-depth $6$, because the fourth and fifth $T$-gates form a
single $T$-stage. Using trivial commutations, the circuit
(\ref{eqn-nc}) can easily be reduced to $T$-depth $4$:
\begin{equation}\label{eqn-nc4}
  \m{\begin{qcircuit}[scale=0.5]
    \grid{2}{0,1,2};
    \controlled{\notgate}{1,0}{1,2};
  \end{qcircuit}
  } =
  \m{\begin{qcircuit}[scale=0.5]
    \grid{13}{0,1,2};
    \gate{$H$}{1,0};
    \controlled{\notgate}{2,0}{1};
    \gate{$T\da$}{3,0};
    \controlled{\notgate}{4,0}{2};
    \gate{$T$}{5,0};
    \controlled{\notgate}{6,0}{1};
    \gate{$T\da$}{7,0};
    \gate{$T\da$}{7,1};
    \controlled{\notgate}{8,0}{2};
    \gate{$T$}{10,0};
    \gate{$H$}{12,0};
    \controlled{\notgate}{9,1}{2};
    \gate{$T\da$}{10,1};
    \controlled{\notgate}{11,1}{2};
    \gate{$S$}{12,1};
    \gate{$T$}{10,2};
  \end{qcircuit}}
\end{equation}
Amy et al.~{\cite{AMMR12}} further improved the $T$-depth of the
Toffoli gate to $3$, using the following circuit. They conjecture that
for circuits without ancillas, this $T$-depth is optimal.
\begin{equation}\label{eqn-ammr}
  \scalebox{0.8}{\mp{0.55}{\begin{qcircuit}[scale=0.5]
    \grid{2}{0,1,2};
    \controlled{\notgate}{1,0}{1,2};
  \end{qcircuit}}
  } =
  \m{\begin{qcircuit}[scale=0.5]
    \gridx{-.2}{11}{0,1,2};
    \gate{$H$}{0.8,0};
    \gate{$T$}{2,0};
    \gate{$T$}{2,1};
    \gate{$T\da$}{2,2};
    \controlled{\notgate}{3,1}{2};
    \controlled{\notgate}{4,2}{0};
    \gate{$T\da$}{5,2};
    \controlled{\notgate}{5,0}{1};
    \controlled{\notgate}{6,2}{1};
    \gate{$T$}{7,0};
    \gate{$T\da$}{7,1};
    \gate{$T\da$}{7,2};
    \controlled{\notgate}{8,2}{0};
    \controlled{\notgate}{9,0}{1};
    \gate{$S$}{9,2};
    \gate{$H$}{10,0};
    \controlled{\notgate}{10,1}{2};
  \end{qcircuit}}
\end{equation}
The purpose of this note is to show that, with the use of ancillas,
the $T$-depth of the Toffoli gate, and of many (but not all) other
circuits, can be reduced to 1. This may be useful in quantum computing
architectures where $T$-gates are expensive and ancillas are cheap.

\section{A $T$-depth one representation of the Toffoli gate}
\label{sec-toffoli}

\begin{figure*}
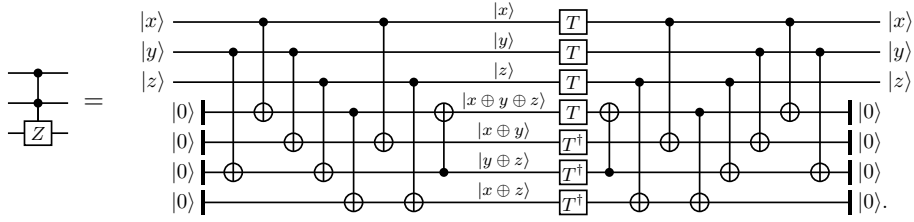

\[
  \m{\begin{qcircuit}[scale=0.5]
      \grid{2}{0,1,2};
      \controlled{\gate{$Z$}}{1,0}{1,2};
    \end{qcircuit}
  } =
  \m{\begin{qcircuit}[scale=0.5]
      \gridx{0}{23.5}{4,5,6};
      \gridx{1}{22.5}{0,1,2,3};
      \leftlabel{$\ket x$}{0,6};
      \leftlabel{$\ket y$}{0,5};
      \leftlabel{$\ket z$}{0,4};
      \init{$\ket 0$}{1,0};
      \init{$\ket 0$}{1,1};
      \init{$\ket 0$}{1,2};
      \init{$\ket 0$}{1,3};
      \controlled{\notgate}{3,3}{6};
      \controlled{\notgate}{4,2}{5};
      \controlled{\notgate}{5,1}{4};
      \controlled{\notgate}{6,0}{3};
      \controlled{\notgate}{7,2}{6};
      \controlled{\notgate}{2,1}{5};
      \controlled{\notgate}{8,0}{4};
      \controlled{\notgate}{9,3}{1};
      \wirelabel{$\ket{x}$}{11,6};
      \wirelabel{$\ket{y}$}{11,5};
      \wirelabel{$\ket{z}$}{11,4};
      \wirelabel{$\ket{x\oplus y\oplus z}$}{11,3};
      \wirelabel{$\ket{x\oplus y}$}{11,2};
      \wirelabel{$\ket{y\oplus z}$}{11,1};
      \wirelabel{$\ket{x\oplus z}$}{11,0};
      \gate{$T$}{13.3,6};
      \gate{$T$}{13.3,5};
      \gate{$T$}{13.3,4};
      \gate{$T$}{13.3,3};
      \gate{$T\da$}{13.3,2};
      \gate{$T\da$}{13.3,1};
      \gate{$T\da$}{13.3,0};
      \controlled{\notgate}{20.5,3}{6};
      \controlled{\notgate}{19.5,2}{5};
      \controlled{\notgate}{18.5,1}{4};
      \controlled{\notgate}{17.5,0}{3};
      \controlled{\notgate}{16.5,2}{6};
      \controlled{\notgate}{21.5,1}{5};
      \controlled{\notgate}{15.5,0}{4};
      \controlled{\notgate}{14.5,3}{1};
      \rightlabel{$\ket x$}{23.5,6};
      \rightlabel{$\ket y$}{23.5,5};
      \rightlabel{$\ket z$}{23.5,4};
      \term{$\ket 0$.}{22.5,0};
      \term{$\ket 0$}{22.5,1};
      \term{$\ket 0$}{22.5,2};
      \term{$\ket 0$}{22.5,3};
    \end{qcircuit}
  }
\]
\caption{$T$-depth 1 representation of the Toffoli gate}\label{fig-1}
\end{figure*}
Recall that the Clifford group for any number of qubits is generated
by the Hadamard gate $H$, the phase gate $S=T^2$, the controlled
not-gate, and unit scalars. As usual, we write $X$, $Y$, and $Z$ for
the Pauli operators.
\[ H = \frac{1}{\sqrt{2}}\zmatrix{cc}{1&1\\1&-1}, \sep
   S = \zmatrix{cc}{1&0\\0&i},
\]
\[
   X = \zmatrix{cc}{0&1\\1&0}, \sep
   Y = \zmatrix{cc}{0&-i\\i&0}, \sep
   Z = \zmatrix{cc}{1&0\\0&-1}
\]
The Toffoli gate is a doubly-controlled not-gate. It is equivalent to
a doubly-controlled $Z$-gate via a basis change:
\begin{equation}\label{eqn-xhzh}
  \m{\begin{qcircuit}[scale=0.45]
      \grid{2}{0,1,2};
      \controlled{\notgate}{1,0}{1,2};
    \end{qcircuit}
  } =
  \m{\begin{qcircuit}[scale=0.45]
      \grid{5}{0,1,2};
      \gate{$H$}{1,0};
      \controlled{\gate{$Z$}}{2.5,0}{1,2};
      \gate{$H$}{4,0};
    \end{qcircuit}.
  }
\end{equation}
Now consider a computational basis state $\ket{xyz}$, where
$x,y,z\in\s{0,1}$.  The effect of the doubly-controlled $Z$-gate is to
map $\ket{xyz}$ to $(-1)^{xyz}\ket{xyz}$. Let us write ``$\oplus$''
for modulo-2 addition in $\s{0,1}$, and ``$+$'' and ``$-$'' for the
usual addition and subtraction of integers. We then have the following
inclusion-exclusion style formula for $x,y,z\in\s{0,1}$:
\begin{equation}\label{eqn-inc-exc}
 4xyz = x + y + z - (x\oplus y) - (y\oplus z) - (x\oplus z) + (x\oplus y\oplus z).
\end{equation}
This is easy to prove by case distinction, or algebraically using
$x\oplus y = x+y-2xy$. Now let $\omega=(-1)^{1/4}=e^{i\pi/4}$.
From (\ref{eqn-inc-exc}), we have
\begin{equation}\label{eqn-omega}
  \begin{split}
  &(-1)^{xyz} = \omega^{4xyz} 
  \\&\quad= \omega^{x}\; \omega^{y}\; \omega^{z}\; (\omega\da)^{x\oplus y}\; (\omega\da)^{y\oplus z}\; (\omega\da)^{x\oplus z}\; \omega^{x\oplus y\oplus z}.
\end{split}
\end{equation}
Note that $T\ket{x} = \omega^x\ket{x}$, and therefore, the
doubly-controlled $Z$-gate can be implemented by applying $T$-gates to
qubits in states $\ket{x}$, $\ket{y}$, $\ket{z}$, and $\ket{x\oplus
  y\oplus z}$, and $T\da$-gates to qubits in states $\ket{x\oplus
  y}$, $\ket{y\oplus z}$, and $\ket{x\oplus z}$. This can be done in
any order, or even in parallel, using four ancillas, as shown in
Figure~\ref{fig-1}.
Combining this with (\ref{eqn-xhzh}), we obtain a representation of
the Toffoli gate of $T$-depth 1 and overall depth~7.

\begin{remark}
  It is interesting to note that the decompositions of Nielsen and
  Chuang (\ref{eqn-nc}) and Amy et al.~(\ref{eqn-ammr}) follow
  precisely the same pattern, i.e., they can both be seen to be direct
  implementations of (\ref{eqn-omega}). The only difference is that in
  each of the circuits, one of the $T$-gates has been needlessly
  decomposed into $T\da$ and $S$.
\end{remark}

\section{An application to multiply-con\-trolled gates}

Consider a doubly-controlled $(-iZ)$-gate:
\begin{equation}\label{eqn-ps-gate}
  \m{\begin{qcircuit}[scale=0.5]
      \gridx{-.3}{2.3}{0,1,2};
      \controlled{\widegate{$-iZ$}{.75}}{1,0}{1,2};
    \end{qcircuit}
  } = \!\!\!
  \m{\begin{qcircuit}[scale=0.5]
      \grid{3}{0,1,2};
      \leftlabel{$\ket x$}{0,2};
      \leftlabel{$\ket y$}{0,1};
      \leftlabel{$\ket z$}{0,0};
      \controlled{\gate{$Z$}}{1,0}{1,2};
      \controlled{\gate{$S\da$}}{2,1}{2,2};
    \end{qcircuit}.
  }
\end{equation}
The doubly-controlled $Z$-gate is a diagonal gate whose effect is
given by (\ref{eqn-omega}). The controlled $S\da$-gate is a diagonal
gate whose effect is given by
$(-i)^{xy} = (\omega\da)^{x}\;(\omega\da)^{y}\;\omega^{x\oplus y}$. It follows
that the combined effect of the two gates is
\begin{equation}
  (-1)^{xyz}\;(-i)^{xy} = \omega^{z}\; (\omega\da)^{y\oplus z}\; (\omega\da)^{x\oplus z}\;
  \omega^{x\oplus y\oplus z},
\end{equation}
which therefore requires a $T$-count of only~4. Using one
ancilla, this can be achieved with $T$-depth~1 and overall depth~5:
\begin{equation}\label{eqn-tri-z}
  \m{\begin{qcircuit}[scale=0.5]
      \gridx{-.3}{2.3}{0,1,2};
      \controlled{\widegate{$-iZ$}{.75}}{1,0}{1,2};
    \end{qcircuit}
  } \ = \!\!\!
  \m{\begin{qcircuit}[scale=0.5]
      \gridx{1}{14.5}{4,5,6};
      \gridx{2}{13.5}{3};
      \leftlabel{$\ket x$}{1,6};
      \leftlabel{$\ket y$}{1,5};
      \leftlabel{$\ket z$}{1,4};
      \init{$\ket 0$}{2,3};
      \controlled{\notgate}{2,5}{4};
      \controlled{\notgate}{4,6}{4};
      \controlled{\notgate}{3,3}{6};
      \controlled{\notgate}{5,3}{5};
      \wirelabel{$\ket{x\oplus z}$}{7,6};
      \wirelabel{$\ket{y\oplus z}$}{7,5};
      \wirelabel{$\ket{z}$}{7,4};
      \wirelabel{$\ket{x\oplus y\oplus z}$}{7,3};
      \gate{$T\da$}{9.3,6};
      \gate{$T\da$}{9.3,5};
      \gate{$T$}{9.3,4};
      \gate{$T$}{9.3,3};
      \controlled{\notgate}{13.5,5}{4};
      \controlled{\notgate}{11.5,6}{4};
      \controlled{\notgate}{12.5,3}{6};
      \controlled{\notgate}{10.5,3}{5};
      \term{$\ket 0$.}{13.5,3};
      \rightlabel{$\ket x$}{14.5,6};
      \rightlabel{$\ket y$}{14.5,5};
      \rightlabel{$\ket z$}{14.5,4};
    \end{qcircuit}
  }
\end{equation}
Alternatively, one can find an implementation that uses no ancilla. It
uses fewer overall gates, but has $T$-depth~2 and overall depth~7:
\begin{equation}\label{eqn-tri-z-alt}
  \m{\begin{qcircuit}[scale=0.5]
      \gridx{-.3}{2.3}{0,1,2};
      \controlled{\widegate{$-iZ$}{.75}}{1,0}{1,2};
    \end{qcircuit}
  } \ =\!\!\!
  \m{\begin{qcircuit}[scale=0.5]
      \gridx{1}{14.5}{4,5,6};
      \leftlabel{$\ket x$}{1,6};
      \leftlabel{$\ket y$}{1,5};
      \leftlabel{$\ket z$}{1,4};
      \controlled{\notgate}{2,5}{4};
      \controlled{\notgate}{3,6}{5};
      \wirelabel{$\ket{z}$}{5,4};
      \wirelabel{$\ket{y\oplus z}$}{5,5};
      \wirelabel{$\ket{x\oplus y\oplus z}$}{5,6};
      \gate{$T$}{7.3,6};
      \gate{$T\da$}{7.3,5};
      \gate{$T$}{7.3,4};
      \controlled{\notgate}{8.5,5}{4};
      \controlled{\notgate}{9.5,6}{5};
      \wirelabel{$\ket{x\oplus z}$}{11,6};
      \gate{$T\da$}{12.5,6};
      \controlled{\notgate}{13.5,6}{4};
      \rightlabel{$\ket x$}{14.5,6};
      \rightlabel{$\ket y$}{14.5,5};
      \rightlabel{$\ket z$.}{14.5,4};
    \end{qcircuit}
  }
\end{equation}
We also have
\begin{equation}\label{eqn-tri-not}
  \m{\begin{qcircuit}[scale=0.5]
      \gridx{-.3}{2.3}{0,1,2};
      \controlled{\widegate{$-iX$}{.75}}{1,0}{1,2};
    \end{qcircuit}
  } =
  \m{\begin{qcircuit}[scale=0.5]
      \gridx{-1.5}{3.5}{0,1,2};
      \gate{$H$}{-0.5,0};
      \controlled{\widegate{$-iZ$}{.75}}{1,0}{1,2};
      \gate{$H$}{2.5,0};
    \end{qcircuit}
  } =
  \m{\begin{qcircuit}[scale=0.5]
      \grid{3}{0,1,2};
      \controlled{\notgate}{1,0}{1,2};
      \controlled{\gate{$S\da$}}{2,1}{2,2};
    \end{qcircuit}.
  }
\end{equation}
Suppose we have a Clifford+$T$-representation of some
controlled quantum gate $G$, and we wish to obtain an efficient
Clifford+$T$-representation of a doubly-controlled $G$-gate. Using
(\ref{eqn-tri-z}), (\ref{eqn-tri-not}), and (\ref{eqn-ccg}), the cost
of doing so is at most 8 additional $T$-gates, increasing the
$T$-depth by at most 2, and the overall depth by at most 14, using 2
ancillas:
\begin{equation}\label{eqn-ccg}
  \m{\begin{qcircuit}[scale=0.45]
      \gridx{-.5}{2.5}{0,1,2};
      \leftlabel{$\ket x$}{-.5,2};
      \leftlabel{$\ket y$}{-.5,1};
      \leftlabel{$\ket\phi$}{-.5,0};
      \multiwire{0,0};
      \controlled{\gate{$G$}}{1,0}{2,1};
      \multiwire{2,0};
    \end{qcircuit}
  } =\!\!\!
  \m{\begin{qcircuit}[scale=0.45]
      \gridx{-.6}{6.6}{0,2,3};
      \gridx{.4}{5.6}{1};
      \leftlabel{$\ket x$}{-.6,3};
      \leftlabel{$\ket y$}{-.6,2};
      \leftlabel{$\ket\phi$}{-.6,0};
      \init{$\ket 0$}{.4,1};
      \controlled{\widegate{$-iX$}{.75}}{1.7,1}{3,2};
      \multiwire{1,0};
      \controlled{\gate{$G$}}{3,0}{1};
      \multiwire{5,0};
      \controlled{\widegate{$iX$}{.75}}{4.3,1}{3,2};
      \term{$\ket 0$}{5.6,1};
    \end{qcircuit}.
  }
\end{equation}
Note that the cost of the additional control, in terms of the overall
gate count, is 28 (2 times 12 gates from (\ref{eqn-tri-z}) and 2 times
2 Hadamard gates from (\ref{eqn-tri-not})). This can be reduced to 26
by leaving the ancilla in (\ref{eqn-tri-z}) in state $\ket{x}$ instead
of $\ket{0}$; however, doing so requires carrying this ancilla during
the computation of $G$, which may involve a tradeoff.

If (\ref{eqn-tri-z-alt}) is used instead of (\ref{eqn-tri-z}), the
overall gate count cost of (\ref{eqn-ccg}) decreases to 22, and the
ancilla use to~1. However, the depth and $T$-depth cost increase to~18
and~4, respectively.

\begin{remark}
  The above construction can be iterated to add $n$ additional
  controls to a controlled gate at the cost of $T$-count $8n$ and
  $T$-depth $2\floor{\log_2{n+1}}$. The logarithm in the expression
  for $T$-depth arises because a pair of $T$-stages is sufficient to
  {\em double} the number of controls, as shown here for $n=3$:
  \begin{equation}
    \m{\begin{qcircuit}[scale=0.45]
        \gridx{-.5}{2.5}{0,1,2,3,4};
        \multiwire{0,0};
        \controlled{\gate{$G$}}{1,0}{4,3,2,1};
        \multiwire{2,0};
      \end{qcircuit}
    } = \!\!\!
    \m{\begin{qcircuit}[scale=0.45]
        \gridx{-2.4}{8.4}{0,4,3,6,7};
        \gridx{-.9}{6.9}{2,5};
        \gridx{-.9}{6.9}{1};
        \init{$\ket 0$}{-.9,5};
        \init{$\ket 0$}{-.9,2};
        \init{$\ket 0$}{-.9,1};
        \controlled{\widegate{$-iX$}{.75}}{.4,5}{6,7};
        \controlled{\widegate{$-iX$}{.75}}{.4,2}{4,3};
        \controlled{\widegate{$-iX$}{.75}}{1.7,1}{5,2};
        \multiwire{1,0};
        \controlled{\gate{$G$}}{3,0}{1};
        \multiwire{5,0};
        \controlled{\widegate{$iX$}{.75}}{4.3,1}{5,2};
        \controlled{\widegate{$iX$}{.75}}{5.6,2}{4,3};
        \controlled{\widegate{$iX$}{.75}}{5.6,5}{6,7};
        \term{$\ket 0$}{6.9,1};
        \term{$\ket 0$}{6.9,5};
        \term{$\ket 0$}{6.9,2};
      \end{qcircuit}.
    }
  \end{equation}
  For example, this yields an implementation of a triply-controlled
  not-gate with $T$-count $15$ and $T$-depth $3$ (7 $T$-gates for the
  Toffoli gate, and 8 $T$-gates for the additional control); or a
  quintuply-controlled not-gate with $T$-count $31$ and $T$-depth
  $5$. It is not currently known whether any of these $T$-counts or
  depths are optimal.
\end{remark}

\begin{remark}
  Because the $T$-gate is diagonal with $T\ket0=\ket0$, it can be
  regarded as a controlled gate, namely, a controlled global phase
  change. Therefore, we can use the above procedure to implement a
  controlled $T$-gate with $T$-count 9 as follows:
  \begin{equation}\label{eqn-cT}
  \m{\begin{qcircuit}[scale=0.45]
      \gridx{-.5}{2.5}{1,2};
      \controlled{\gate{$T$}}{1,1}{2};
    \end{qcircuit}
  } = \!\!\!
  \m{\begin{qcircuit}[scale=0.45]
      \gridx{-1.1}{7.1}{2,3};
      \gridx{.4}{5.6}{1};
      \init{$\ket 0$}{.4,1};
      \controlled{\widegate{$-iX$}{.75}}{1.5,1}{3,2};
      \gate{$T$}{3,1};
      \controlled{\widegate{$iX$}{.75}}{4.5,1}{3,2};
      \term{$\ket 0$.}{5.6,1};
    \end{qcircuit}
  }
  \end{equation}
  Using (\ref{eqn-tri-z}), we obtain $T$-depth~3, depth~15, and gate
  count~29 with two ancillas. As before, by leaving the ancilla of
  (\ref{eqn-tri-z}) in state $\ket x$ instead of $\ket 0$, the gate
  count can be reduced to~27. Alternatively, using
  (\ref{eqn-tri-z-alt}), we obtain $T$-depth~5, depth~19, and gate
  count~27 with one ancilla. Except for slightly improved overall gate
  counts, these results are the same as those in {\cite{AMMR12}}.
\end{remark}

\section{$T$-depth one representation of almost classical circuits}

It is straightforward to generalize the construction of
Section~\ref{sec-toffoli} to circuits built up from $T$ and {\em
  almost classical} gates.

\begin{definition}
  A unitary operator is {\em classical} if it is given by a
  permutation of computational basis states, and {\em diagonal} if its
  matrix representation is diagonal in the computational basis. Let us
  call an operator {\em almost classical} if it can be written as a
  product of a classical operator and a diagonal operator.
\end{definition}

The almost classical operators obviously form a group. Of the 24
single-qubit Clifford operators (taken modulo global phase), exactly 8
are almost classical; they form the subgroup generated by $S$ and $X$.

\begin{definition}
  Let $\PCG$ be a set of gates. We say that a circuit is
  {\em $\PCG+T$-representable} if it can be built with gates from
  $\PCG\cup\s{T}$ and their inverses. We say that such a circuit has
  {\em $T$-depth $n$ (relative to $\PCG$)} if it can be written using only
  gates from $\PCG$ and $n$ $T$-stages.
\end{definition}

\begin{theorem}\label{thm-almost}
  Let $\PCG$ be any set of almost classical gates, containing the
  controlled not-gate. Using ancillas, any $\PCG+T$-representable
  $n$-qubit circuit can be written of $T$-depth 1 (relative to
  $\PCG$).
\end{theorem}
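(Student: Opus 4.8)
The plan is to put an arbitrary $\PCG+T$-representable circuit into a normal form in which every $T$-gate is replaced by a conjugated $T$-gate that is still diagonal, and then realize the resulting diagonal operator by the same device as in Section~\ref{sec-toffoli}: compute each required phase function into a fresh ancilla using only gates from $\PCG$, apply all the $T$- and $T\da$-gates in a single $T$-stage, and uncompute.

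\emph{Step 1 (normal form).} Group consecutive $\PCG$-gates so that the given circuit reads $U = A_m\,G_m\,A_{m-1}\cdots A_1\,G_1\,A_0$, where each $A_j$ is a product of gates from $\PCG$ and their inverses (hence almost classical, since the almost classical operators form a group), and each $G_j$ is a single $T$- or $T\da$-gate on some qubit $i_j$. Setting $B_j := A_j A_{j-1}\cdots A_0$, a telescoping computation gives $U = A\cdot\widetilde G_m\cdots\widetilde G_1$ with $A := B_m$ and $\widetilde G_j := B_{j-1}\da\,G_j\,B_{j-1}$. Since all the $\widetilde G_j$ will turn out to be diagonal, their order is immaterial.

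\emph{Step 2 (each $\widetilde G_j$ is a phase gate controlled by a $\PCG$-computable bit).} Because $B_{j-1}$ is almost classical, $B_{j-1}\ket x = d(x)\,\ket{\sigma(x)}$ for every computational basis state $\ket x$, where $\sigma$ is a permutation of basis states and $|d(x)|=1$. Writing $f_j(x)\in\s{0,1}$ for the value of qubit $i_j$ in the basis state $\sigma(x)$, we get $G_j\,B_{j-1}\ket x = \omega^{\pm f_j(x)}\,B_{j-1}\ket x$, hence $\widetilde G_j\ket x = \omega^{\pm f_j(x)}\ket x$ (the sign being $+$ if $G_j=T$ and $-$ if $G_j=T\da$). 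Thus $U = A\cdot V$, where $V\ket x = \omega^{g(x)}\ket x$ and $g(x) = \sum_j \pm f_j(x)$ is an integer. Moreover $f_j(x)$ can be written into a fresh $\ket0$-ancilla using only $\PCG$-gates: apply $B_{j-1}$ to the input register (producing $d(x)\ket{\sigma(x)}$), copy qubit $i_j$ onto the ancilla with a $\CNOT$ (legal since $\CNOT\in\PCG$), and apply $B_{j-1}\da$; the scalar $d(x)$ cancels and the registers return to $\ket x\ket{f_j(x)}$.

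\emph{Step 3 (assembling a $T$-depth one circuit).} Allocate ancillas $a_1,\ldots,a_m$ in state $\ket0$. Using Step 2, write $f_j(x)$ into $a_j$ for each $j$; these sub-circuits use only $\PCG$-gates, touch only the input register and $a_j$, and each returns the input register to $\ket x$, so after all of them the state is $\ket x\,\ket{f_1(x)}_{a_1}\cdots\ket{f_m(x)}_{a_m}$. Now apply $T$ or $T\da$ to each $a_j$ according to the sign of the $j$-th term of $g$, \emph{all simultaneously}: this one $T$-stage multiplies the state by $\omega^{g(x)}$ while leaving each $a_j$ in state $\ket{f_j(x)}$. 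Uncompute all the ancillas (reversing the $\PCG$-circuits of Step 2) and finally apply $A$ to the input register. The result equals $A V\ket x = U\ket x$, uses only $\PCG$-gates together with a single $T$-stage, and returns every ancilla to $\ket0$; hence it has $T$-depth $1$ relative to $\PCG$.

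\emph{Main obstacle.} The one delicate point is Step 2: in general the ``classical part'' of $B_{j-1}$ need not itself be realizable by a $\PCG$-circuit, so one cannot directly prepare $\ket{\sigma(x)}$. The fix is to prepare $B_{j-1}\ket x = d(x)\ket{\sigma(x)}$ instead --- which \emph{is} a $\PCG$-circuit --- and to note that the unwanted scalar $d(x)$ is harmless: a $\CNOT$ reads off the bit $f_j(x)$ without seeing the phase, and $d(x)$ is erased when $B_{j-1}$ is undone. Everything else is bookkeeping; specializing to $\PCG=\s{\CNOT}$ and to the doubly-controlled $Z$-gate recovers the construction of Figure~\ref{fig-1}.
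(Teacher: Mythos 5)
Your proof is correct, and it implements exactly the idea the paper announces informally (copy each boolean condition into an ancilla, run all $T$-gates in a single stage, uncompute), but your formal route is genuinely different from the paper's. The paper argues by induction on the circuit, maintaining the invariant $A = A_2\cp A_1$ with $A_1$ diagonal of $T$-depth at most $1$ and $A_2$ of $T$-depth $0$; all the semantic content is carried by the single observation that conjugating a diagonal operator by an almost classical gate leaves it diagonal, so the ancilla introduced for each $T$ is uncomputed correctly. You instead give a direct global argument: the telescoping decomposition $U = B_m\,(B_{m-1}\da G_m B_{m-1})\cdots(B_0\da G_1 B_0)$, an explicit verification that each conjugated $T$ is the diagonal operator $\ket x\mapsto\omega^{\pm f_j(x)}\ket x$ with $f_j(x)$ a bit of $\sigma(x)$ readable by a \CNOT{} (the unit scalar $d(x)$ of the almost classical prefix cancelling under $B_{j-1}\da$), and then one compute/$T$-stage/uncompute sandwich followed by $B_m$. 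Both arguments are sound; yours makes the conditional-phase semantics explicit and avoids induction, whereas the paper's inductive bookkeeping automatically shares the conjugating $\PCG$-blocks between successive $T$-gates, which is what gives the constant-factor size increase remarked on after the theorem---your construction re-applies each prefix $B_{j-1}$ and its inverse separately for every $T$-gate, so the resulting circuit can grow quadratically (harmless here, since the theorem makes no size claim, but worth noting). One small shared caveat: like the paper's own proof, which uses $G\da$ in its Case 1, you need ``gates from $\PCG$'' in the definition of relative $T$-depth to include inverses of $\PCG$-gates, which is clearly the intended reading.
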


\begin{proof}
  The proof idea is simple. Each $T$-gate in the circuit is a
  $\pi/4$ phase change conditioned on some boolean combination of the
  inputs. Intuitively, one may copy each such boolean condition to an
  ancilla, execute all $T$-gates in parallel, uncompute the ancillas,
  and finally re-compute the output.

  The formal proof proceeds by induction on circuits. For each
  $\PCG+T$-representable $n$-qubit circuit $A$, we will by induction
  construct $\PCG+T$-representable circuits $A_1$ and $A_2$ such that
  $A_1$ is diagonal and has $T$-depth at most 1, $A_2$ has $T$-depth
  0, and $A = A_2\cp A_1$.

  The base case occurs when $A=I$ is the identity circuit. In this
  case, we can let $A_1=A_2=I$, and there is nothing to show.

  For the induction step, suppose $A$ is of the form $A'\cp G$, where
  $G$ is a single gate. By induction hypothesis, there is a
  decomposition $A' = A'_2\cp A'_1$ satisfying the above conditions.

  \begin{itemize}
  \item Case 1: $G$ is not equal to $T$ or $T\da$. In this case, we
    let $A_1 = G\da\cp A'_1\cp G$ and $A_2 = A'_2\cp G$. Then
    trivially, $A = A_2\cp A_1$, and $A_1$ and $A_2$ have the required
    $T$-depths. Moreover, since $G$ is almost classical, $A_1$ is
    diagonal.
  \item Case 2: $G$ is $T$, applied to the $i$th qubit. In
    this case, we let
    \begin{equation}
      A_1 = \m{
        \begin{qcircuit}[scale=0.45]
          \grid{7}{1,3,4,5};
          \gridx{1}{6}{0};
          \draw (0.5,2) node {$\ldots$};
          \init{$\ket 0$}{1,0};
          \wirelabel{$i$}{1,3};
          \controlled{\notgate}{2,0}{3};
          \biggate{$A'_1$}{3,1}{4,5};
          \gate{$T$}{3.5,0};
          \wirelabel{$i$}{6,3};
          \controlled{\notgate}{5,0}{3};
          \term{$\ket 0$}{6,0};
          \draw (6.5,2) node {$\ldots$};
        \end{qcircuit}
      }
    \end{equation}
    and $A_2=A'_2$.  Since $A'_1$ is diagonal, so is $A_1$, and it
    follows that the ancilla is uncomputed correctly.  Moreover, $A_1$
    is equivalent to $A'_1\cp G$, and therefore, $A=A_2\cp
    A_1$. Finally, since $A'_1$ has $T$-depth at most 1, so does
    $A_1$.
  \item Case 3: $G$ is $T\da$, applied to the $i$th qubit. This is
    entirely analogous to case 2.\qedhere
\end{itemize}
\end{proof}

A similar result appears in Section~6.4 of version 2 of
{\cite{AMMR12}}, but with a proof that is quite different. 

Note that the gate set $\PCG$ in Theorem~\ref{thm-almost} is not necessarily
assumed to consist of Clifford gates. For example, if on some
hypothetical architecture, $T$-gates are expensive but Toffoli gates
are cheap, one can include the Toffoli gate in the set $\PCG$.

In general, the proof of Theorem~\ref{thm-almost} increases the size
of the circuit, but only by a constant factor. In practice, it is
often possible to find a much smaller circuit than the one constructed
in the proof.

If we take $\PCG=\s{S,X,\CNOT}$ and apply Theorem~\ref{thm-almost}
to circuit (\ref{eqn-nc}) (excluding the initial and final Hadamard
gate), we obtain another $T$-depth one representation of the Toffoli gate.

We also note that there is a trade-off between $T$-depth and the
number of ancillas.  The procedure of the proof of
Theorem~\ref{thm-almost} adds one ancilla for each $T$-gate. However,
by splitting a circuit with $T$-count $n$ into two circuits with
$T$-count $\ceil{n/2}$ each, it is clear that one can approximately
half the number of ancillas by doubling the $T$-depth, and so forth.

\section{Some circuits cannot be written with $T$-depth one}

The result of the previous section shows that any two $T$-stages can
be combined into a single $T$-stage, provided that they are only
separated by almost classical gates. One may wonder whether perhaps
{\em all} Clifford+$T$ circuits can be written of $T$-depth one, using
a sufficient number of ancillas initialized to $\ket 0$. We show that
this cannot be done.

\begin{theorem}\label{thm-tht}
  The single-qubit operator $THT$ cannot be implemented as a
  Clifford+$T$ circuit of $T$-depth 1, using an arbitrary number of
  ancillas initialized to $\ket 0$. This is true even if the ancillas
  are not required to be returned to their initial state at the end of
  the computation.
\end{theorem}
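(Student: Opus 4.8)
The plan is to argue that a $T$-depth 1 Clifford$+T$ circuit can only produce operators whose ``non-Clifford content'' is too simple to realize $THT$. Concretely, such a circuit has the form $C_2 \circ D \circ C_1$, where $C_1, C_2$ are Clifford operators on the extended (system $+$ ancilla) Hilbert space and $D$ is a single $T$-stage, i.e.\ a tensor product of $T^{\pm 1}$ on some subset of qubits and identity elsewhere. Since $D$ is diagonal with entries in $\{1,\omega\}$, it can be written as $D = \prod_j \exp(i\pi(1-Z_j)/8)$ over the qubits $j$ in the stage; up to a global phase, $D$ is a product of commuting single-qubit $Z$-rotations by $\pm\pi/4$. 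The key structural fact I want to exploit is how such a $D$ conjugates the Pauli group: $C_1$ maps each Pauli $X_j$ (on the $j$th system qubit) to some Pauli $P = C_1 X_j C_1\da$, and then $D P D\da$ is either $P$ (if $P$ commutes with every $Z_k$ appearing in $D$) or a linear combination $\tfrac{1}{\sqrt2}(P \pm iP')$ where $P' = (\prod Z_k)P$ — that is, $D$ sends a Pauli to at most a ``sum of two Paulis''. This is exactly the mechanism behind $T$ being outside the Clifford group, and it caps the complexity of $C_2 D C_1$.

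The second ingredient is to pin down what $THT$ actually does to Paulis and show it exceeds that cap. Using $T X T\da = \tfrac{1}{\sqrt2}(X+Y)$ and $T Z T\da = Z$, one computes the conjugation action of $U = THT$ on $X$ and $Z$ (a short explicit calculation — $H$ swaps the roles of $X$ and $Z$ in between, so $U X U\da$ and $U Z U\da$ become genuine ``three-term'' combinations, each a $\pm$ combination of all three of $X, Y, Z$ with coefficients involving $1/2$ and $1/\sqrt2$, not expressible as $\tfrac{1}{\sqrt2}(P\pm iP')$ for Paulis $P, P'$). I would then need the bridging lemma: if $U$ is implemented by the circuit $C_2 D C_1$ with ancillas started in $\ket0$ (and arbitrary final state), then the action of $U$ on $\ket\psi \otimes \ket{0\cdots0}$ must match; restricting the conjugation relation $U X U\da = $ (the reduced action of $C_2 D C_1 (X\otimes I) C_1\da D\da C_2\da$) to the code subspace $\mathrm{span}\{\ket\psi\otimes\ket{0\cdots0}\}$, and likewise for $Z$, I want to conclude that $U X U\da$ and $U Z U\da$ are each ``sums of at most two Paulis'' — contradicting the previous paragraph. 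The phrase ``arbitrary final ancilla state'' is handled by noting it only relaxes an equality to ``agrees on the code subspace'', which is still enough because $X, Z$ are supported on the system qubit.

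I expect the main obstacle to be making the ``sum of at most two Paulis'' claim precise and robust under the restriction to the ancilla-$\ket0$ subspace, since $C_1, C_2$ are Clifford on the full space and projecting onto the subspace can in principle scramble things. The clean way around this is probably to work not with Pauli conjugation directly but with a numerical invariant: define, for a unitary $V$ on a single qubit (or its restriction), a quantity measuring how far $VXV\da$ and $VZV\da$ are from being ``low Pauli-rank'' — for instance, the number of Pauli terms, or a smoothness/sparsity measure of the matrix entries — and show (i) Clifford conjugation preserves Pauli-rank $1$, (ii) one $T$-stage at most doubles it, (iii) restriction to a subspace does not increase it, so the whole circuit yields Pauli-rank $\le 2$ on the system qubit, while (iv) $THT$ has Pauli-rank $3$. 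Steps (ii) and (iii) are where the care is needed; everything else is bookkeeping. An alternative, perhaps slicker, route is a stabilizer/phase-polynomial argument: track the phase polynomial of $C_2 D C_1$ restricted to inputs with ancillas $=0$, observe it is a degree-$\le 2$ (quadratic, from the single $T$-stage acting on affine functions of the inputs produced by the Clifford $C_1$) plus linear phase polynomial over the relevant variables, and show $THT$ forces a genuinely cubic (or otherwise higher-degree) phase that no such expression can match. I would develop the Pauli-rank version as the primary line of attack and fall back on the phase-polynomial version if the subspace-restriction step proves stubborn.
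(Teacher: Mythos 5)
Your decomposition $U = C_2\circ D\circ C_1$ matches the paper's, and your instinct to study how the circuit conjugates Pauli operators is also the right one. But the central structural claim of your primary line of attack is false: a single $T$-stage does \emph{not} send a Pauli to a sum of at most two Paulis. Your formula $DPD\da = \tfrac{1}{\sqrt2}(P\pm iP')$ with $P'=(\prod_k Z_k)P$ would be correct if $D$ were a single multi-qubit rotation $\exp(i\theta Z_{k_1}\cdots Z_{k_m})$, but a $T$-stage is a \emph{product} of commuting single-qubit rotations, and conjugation by each factor that fails to commute with $P$ doubles the number of terms. For example, $(T\otimes T)(X\otimes X)(T\otimes T)\da = \tfrac12(X+Y)\otimes(X+Y)$ is a sum of four Paulis; in general one gets $2^k$ terms, where $k$ is the number of stage qubits on which $P$ acts as $X$ or $Y$. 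So step (ii) of your ``Pauli-rank'' invariant fails, and the cap of $2$ becomes $2^k$ with $k$ unbounded (it can grow with the number of ancillas). Moreover, even a corrected count would not finish the job: the obstruction for $THT$ is not the \emph{number} of Pauli terms in $U\da XU = \tfrac12 X+\tfrac12 Y+\tfrac{1}{\sqrt2}Z$ but their \emph{coefficients}, which are not all integer multiples of one common power of $1/\sqrt2$.

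That last observation is exactly what the paper exploits. Since every factor $T^{\pm1}$ contributes the same $1/\sqrt2$, all $2^k$ Pauli terms of $U\da(X\otimes I\otimes\cdots\otimes I)U$ carry one common coefficient $\lambda=(1/\sqrt2)^k$ (up to sign), independent of the input. Taking expectation values in $\ket\phi\otimes\ket{0\cdots0}$ --- which cleanly handles your worry about restricting to the ancilla-$\ket 0$ subspace, as well as the ``arbitrary final ancilla state'' caveat --- each term contributes $\pm\lambda$ or $0$, so $E_{\ket\phi}/\lambda$ is an integer for $\ket\phi\in\s{\ket0,\ket+}$, and hence $E_{\ket0}/E_{\ket+}$ is rational; for $THT$ this ratio is $\sqrt2$. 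Your phase-polynomial fallback has its own problem: $C_1,C_2$ are general Clifford circuits containing Hadamards, so the circuit is not of the $\CNOT+T$ form to which phase polynomials apply. I would replace the ``rank'' invariant with this arithmetic one.
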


Before proving the theorem, we start with a general observation about
Clifford+$T$ circuits of $T$-depth 1.

\begin{proposition}\label{prop-t1}
  Let $U$ be an $n$-qubit Clifford+$T$ circuit of $T$-depth 1. Let
  $\ket\phi$ be any single-qubit state, and consider
  \[ \ket{\psi} =
  U(\ket{\phi}\otimes\ket{0}\otimes\ldots\otimes\ket{0}).
  \]
  Consider the $\s{+1,-1}$-valued Pauli observable $X$ applied to the
  first qubit of $\psi$; denote its expected value by $E_{\ket\phi}$.
  Suppose $E_{\ket{+}}$ is non-zero. Then
  \[ \frac{E_{\ket{0}}}{E_{\ket{+}}}
  \]
  is a rational number.
\end{proposition}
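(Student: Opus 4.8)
The plan is to exploit the rigid normal form forced by $T$-depth~$1$. First I would write $U=C_2\,D\,C_1$, where $C_1,C_2$ are Clifford circuits and $D$ is a single $T$-stage, i.e.\ $D=\bigotimes_k T^{a_k}$ with each $a_k\in\s{-1,0,1}$ (and $D=I$ if the $T$-depth is $0$); in particular $D$ is diagonal in the computational basis. Since $X$ on the first qubit is a Pauli operator and $C_2$ is Clifford, $C_2\da X C_2=\epsilon P$ for some $n$-qubit Pauli $P=\bigotimes_k P_k$ and sign $\epsilon\in\s{+1,-1}$. Writing $\ket{\mathbf 0}$ for $\ket0\otimes\cdots\otimes\ket0$ ($n-1$ copies), the expected value then becomes
\[
  E_{\ket\phi}=\epsilon\,\bra{\phi,\mathbf 0}\,C_1\da\,D\da P D\,C_1\,\ket{\phi,\mathbf 0}.
\]

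Next I would carry out the elementary single-qubit computation of $T^{\pm1}P_kT^{\mp1}$: it equals $P_k$ when $P_k\in\s{I,Z}$, and it equals $\tfrac{1}{\sqrt2}(\pm X\pm Y)$ (with specific signs) when $P_k\in\s{X,Y}$. Tensoring these, $D\da P D=\bigotimes_k T^{-a_k}P_kT^{a_k}$ is $2^{-m/2}$ times an integer (indeed $\pm1$) combination of Pauli operators, where $m$ is the number of indices $k$ with $a_k\neq0$ and $P_k\in\s{X,Y}$. Conjugating by the Clifford $C_1$ preserves this form, so $C_1\da D\da P D\,C_1=2^{-m/2}R$ with $R=\sum_j d_j Q_j$ a finite integer combination of Pauli operators $Q_j$, and hence $E_{\ket\phi}=\epsilon\,2^{-m/2}\sum_j d_j\,\bra{\phi,\mathbf 0}Q_j\ket{\phi,\mathbf 0}$. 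The point that requires care here — and which is really the crux of the argument — is that $m$ and $R$ depend only on $U$ (through $D$ and $P=\epsilon^{-1}C_2\da X C_2$), not on the choice of $\ket\phi$, so the same power of $\sqrt2$ is produced for every input.

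Finally I would evaluate the Pauli matrix elements on the product state $\ket{\phi,\mathbf 0}$. Writing $Q_j=\bigotimes_k Q_{j,k}$, one has $\bra{\phi,\mathbf 0}Q_j\ket{\phi,\mathbf 0}=\bra\phi Q_{j,1}\ket\phi\cdot\prod_{k\ge2}\bra0 Q_{j,k}\ket0$. For every single-qubit Pauli, $\bra0 Q_{j,k}\ket0\in\s{0,1}$ (it is $1$ for $I$ and $Z$, $0$ for $X$ and $Y$), so the product over $k\ge2$ is $0$ or $1$; and for $\phi\in\s{0,+}$ one also has $\bra\phi Q_{j,1}\ket\phi\in\s{0,1}$, since $\bra0 X\ket0=\bra0 Y\ket0=0$, $\bra0 I\ket0=\bra0 Z\ket0=1$, $\bra+ Y\ket+=\bra+ Z\ket+=0$, $\bra+ I\ket+=\bra+ X\ket+=1$. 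Hence $\sum_j d_j\,\bra{\phi,\mathbf 0}Q_j\ket{\phi,\mathbf 0}$ is an integer $N_\phi$ for $\phi\in\s{0,+}$, so $E_{\ket\phi}=\epsilon\,2^{-m/2}N_\phi$. Since $m$ is the same for $\phi=0$ and $\phi=+$, we get $E_{\ket0}/E_{\ket+}=N_0/N_+$, a rational number whenever $E_{\ket+}\neq0$ (equivalently $N_+\neq0$).

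No step presents a genuine obstacle once the $T$-depth-$1$ normal form is made explicit; the only delicate bookkeeping is the observation in the second paragraph that the $\sqrt2$-denominator is governed by a single integer $m$ independent of the input. The choice of the states $\ket0$ and $\ket+$ is exactly what makes the last paragraph work: the functionals $\bra0(\cdot)\ket0$ and $\bra+(\cdot)\ket+$ both send every single-qubit Pauli into $\s{0,1}$, whereas a generic $\ket\phi$ contributes irrational expectation values of $X$, $Y$, or $Z$ and destroys integrality. (The proposition is then applied to $THT$, for which a direct calculation gives $E_{\ket0}/E_{\ket+}=\sqrt2$, which is irrational, so $THT$ admits no $T$-depth~$1$ implementation.)
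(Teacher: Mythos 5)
Your proposal is correct and follows essentially the same route as the paper's proof: decompose $U$ into Clifford--$T$-stage--Clifford, conjugate the observable through to obtain a common factor $2^{-m/2}$ times an integer combination of Pauli operators independent of $\ket\phi$, and use that Pauli expectation values in $\ket0$ and $\ket+$ are integers. The only cosmetic difference is that you allow $T^{\pm1}$ in the $T$-stage explicitly, whereas the paper's normal form uses only $T$'s; this changes nothing in the argument.
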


\begin{proof}
  The expected value of the observable $X$ on the first qubit of
  $\ket{\psi}$ is 
  \begin{equation}\label{eqn-5.1}
    \begin{split}
      E_{\ket\phi} &= \bra{\psi}\;(X\otimes I\dotted\otimes I)\;\ket{\psi}\\
      &= \bra{\phi,0\dotted,0}\;U\da(X\otimes
      I\dotted\otimes I)U\;\ket{\phi,0\dotted,0}.
    \end{split}
  \end{equation}
  We analyze the structure of $U\da(X\otimes I\dotted\otimes
  I)U$. Since $U$ is of $T$-depth 1, it can be written as $U=U_1\circ
  U_2\circ U_3$, where $U_1$ and $U_3$ are Clifford circuits and $U_2
  = T\dotted\otimes T\otimes I\dotted\otimes I$. Since $U_1$ is
  Clifford, we know that $U_1\da(X\otimes I\dotted\otimes I)U_1$ is a
  Pauli operator
  \begin{equation}\label{eqn-5.2}
    U_1\da(X\otimes I\dotted\otimes I)U_1 = \pm A_1\dotted\otimes A_n,
  \end{equation}
  where each $A_i\in\s{X,Y,Z,I}$. Using the relations
  \begin{align*}
    T\da IT &= I, &
    T\da ZT &= Z, \\
    T\da XT &= \frac{1}{\sqrt{2}}X - \frac{1}{\sqrt{2}}Y, &
    T\da YT &= \frac{1}{\sqrt{2}}X + \frac{1}{\sqrt{2}}Y,
  \end{align*}
  we find that 
  \begin{equation}
    \begin{split}
      &U_2\da(\pm A_1\dotted\otimes A_n) U_2 
      \\&= 
      \pm(T\da A_1 T)\dotted\otimes(T\da A_{n_1} T)\otimes
      A_{n_1+1} \dotted\otimes A_n 
      \\&=
      \lambda P_1+\lambda P_2\dotted + \lambda P_m,
    \end{split}
  \end{equation}
  where each $P_j$ is an $n$-qubit Pauli operator. The key observation
  here is that the {\em same} factor $\lambda$ occurs in front of each
  (possibly signed) summand, and $\lambda$ is independent of
  $\ket{\phi}$. In fact, we have $\lambda=(\frac{1}{\sqrt{2}})^{k}$,
  where $k$ is the number of times the operators $X$ and $Y$ occur
  among $A_1,\ldots,A_{n_1}$. Let
  \begin{equation}
    Q_j = U_3\da\;P_j\; U_3.
  \end{equation}
  Since $U_3$ is Clifford, this is again some Pauli operator, say
  \begin{equation}\label{eqn-5.5}
    Q_j = (-1)^{q_j} B_{j,1}\dotted\otimes B_{j,n}.
  \end{equation}
  Combining (\ref{eqn-5.2}) through (\ref{eqn-5.5}), we find
  \begin{equation}
    \begin{split}
      U\da(X\otimes I\dotted\otimes I)U &=
      \lambda Q_1+\lambda Q_2\dotted + \lambda Q_m
      \\&= \lambda\sum_{j=1}^m (-1)^{q_j} B_{j,1}\dotted\otimes B_{j,n}.
    \end{split}
  \end{equation}
  Combining this with (\ref{eqn-5.1}), we get
  \begin{equation}
    E_{\ket\phi} = \lambda\sum_{j=1}^m (-1)^{q_j}
    \bra{\phi}B_{j,1}\ket{\phi}
    \;\bra{0}B_{j,2}\ket{0}
    \;\cdots\; \bra{0}B_{j,n}\ket{0}.
  \end{equation}
  Since each $B_{j,i}\in\s{X,Y,Z,I}$ is a Pauli operator, it follows
  that $E_{\ket\phi}/\lambda$ is rational (indeed, an integer) for
  $\ket{\phi}\in\s{\ket{0},\ket{+}}$. The claim then immediately follows.
\end{proof}

\begin{proof}[Proof of Theorem~\ref{thm-tht}]
  For $U=THT$, we compute 
  \[ U\da X U = \frac{1}{2} X + \frac{1}{2} Y + \frac{1}{\sqrt{2}} Z,
  \]
  and therefore
  \[ E_{\ket{0}} = \bra{0}\; U\da XU \;\ket{0} = \frac{1}{\sqrt{2}}
  \]
  and
  \[
  E_{\ket{+}} = \bra{+}\; U\da XU\;\ket{+} = \frac{1}{2}.
  \]
  Since $E_{\ket{0}}/E_{\ket{+}}$ is irrational, the claim immediately
  follows from Proposition~\ref{prop-t1}.
\end{proof}

\section{Conclusion}

We found a class of circuits whose $T$-depth can be reduced to one, by
using a sufficient number of ancillas. We also showed that there are
circuits whose $T$-depth cannot be reduced to one, regardless of the
number of ancillas used. It remains an open problem how to determine
the minimal $T$-depth or $T$-count of any given Clifford+$T$ circuit.

\section{Acknowledgements}

This research was supported by the Natural Sciences and Engineering
Research Council of Canada (NSERC). This research was also supported
by the Intelligence Advanced Research Projects Activity (IARPA) via
Department of Interior National Business Center contract number
D11PC20168. The U.S. Government is authorized to reproduce and
distribute reprints for Governmental purposes notwithstanding any
copyright annotation thereon. Disclaimer: The views and conclusions
contained herein are those of the authors and should not be
interpreted as necessarily representing the official policies or
endorsements, either expressed or implied, of IARPA, DoI/NBC, or the
U.S. Government.

\bibliographystyle{abbrvunsrt} 
\bibliography{toffoli}

\end{document}